\newtheorem{theorem}{Theorem}
\begin{document}
%
\title{Authentication of Everything in the Internet of Things: Learning and Environmental Effects}

\author{\IEEEauthorblockN{Yaman Sharaf Dabbagh and Walid Saad\\\thanks{This research was supported by the U.S. National Science Foundation under Grant CNS-1524634."}}
\IEEEauthorblockA{Wireless@VT, Bradley Department of Electrical and Computer Engineering, Virginia Tech, VA, USA\\
Email: \{yamans, walids\}@vt.edu\vspace{-0.5cm}}}

\maketitle

\begin{abstract}
Reaping the benefits of the Internet of things (IoT) system is contingent upon developing IoT-specific security solutions. Conventional security and authentication solutions often fail to meet IoT security requirements due to the computationally limited and portable nature of IoT objects. In this paper, an IoT objects authentication framework is proposed. The framework uses device-specific information, called fingerprints, along with a transfer learning tool to authenticate objects in the IoT. The framework tracks the effect of changes in the physical environment on fingerprints and uses unique IoT environmental effects features to detect both cyber and cyber-physical emulation attacks. The proposed environmental effects estimation framework is proven to improve the detection rate of attackers without increasing the false positives rate. The proposed framework is also shown to be able to detect cyber-physical attackers that are capable of replicating the fingerprints of target objects which conventional methods are unable to detect. A transfer learning approach is proposed to allow the use of objects with different types and features in the environmental effects estimation process to enhance the performance of the framework while capturing practical IoT deployments with diverse object types..
Simulation results using real IoT device data show that the proposed approach can yield a 40\% improvement in cyber emulation attacks detection and is able to detect cyber-physical emulation attacks that conventional methods cannot detect. The results also show that the proposed framework improves the authentication accuracy while the transfer learning approach yields up to 70\% additional performance gains.
\end{abstract}

\begin{IEEEkeywords}
Internet of Things; Authentication; Security; Transfer Learning
\end{IEEEkeywords}


%
\IEEEpeerreviewmaketitle

\section{Introduction}
\IEEEPARstart{T}{he} Internet of things~(IoT) is a rapidly emerging paradigm in which physical objects integrate with the cyber world via smart sensors, RFID tags, smartphones, and wearable devices~\cite{park2016learning}. This integration allows physical objects to operate over the Internet so as to collect and exchange data that describe the physical world. The wide variety of cyber-enabled objects remotely operating through various types of networks and protocols raises many serious security and privacy concerns~\cite{weber2010internet}. 
Security threats range from physical attacks to attacks on the semantic application layers where information is processed and analyzed.
One key challenge is that most IoT objects operate at low energy levels with minimal computation capabilities, and thus, require simple security solutions~\cite{atzori2010internet}. Therefore, most complex security techniques, such as conventional cryptography, firewalls, and secure protocols cannot be readily implemented in the IoT due to the strict memory and computing requirements of its devices.



Prior research on IoT security has primarily focused on two main tracks: creating lightweight security methods~\cite{wu2017efficient,yang2016towards,ferdowsi2017deep,patel2015non,bojinov2014mobile,radhakrishnan2013passive} and building secure IoT architectures~\cite{raza2017securesense,stergiou2018secure}.
Efficient authentication of objects in IoT systems is a challenge due to the low computing capabilities of IoT objects. The authors in~\cite{wu2017efficient} proposed an authentication scheme that uses a key change method in order to improve the security of IoT objects while using shorter and less complicated security keys. The authors in~\cite{yang2016towards} focused on anonymous entity authentication and proposed a lightweight scheme for IoT systems. Their proposed scheme used a dynamic accumulator for credentials that solves the issue of credentials update which requires computational power from IoT objects. As for building secured IoT architectures, the authors in~\cite{raza2017securesense} proposed a secure communication architecture specifically designed for cloud-connected IoT objects. Their proposed architecture includes an end-to-end secure communication between low power IoT objects and cloud back-ends. The authors in~\cite{stergiou2018secure} proposed another architecture to secure IoT objects that offloads the computations needed for authentication to the cloud in order to reduce the overhead on IoT objects. However, all the previous approaches and architectures impose high computational requirements on IoT objects for handling cryptographic keys and credentials exchange which some basic IoT objects are not able to process. One promising approach to protect wireless devices with minimal to no computational load on IoT objects is by analyzing IoT signals. This approach has two main types, \emph{signal watermarking} and \emph{device fingerprinting}. In signal watermarking, a predefined signal is watermarked into IoT object signals. In~\cite{ferdowsi2017deep}, the authors presented an approach for dynamic watermarking of IoT signals using deep learning. Meanwhile, device fingerprinting is a technique to authenticate devices using unique features extracted from the objects transmitted signals.
Such fingerprinting can be done with minimal computational overhead. The authors in~\cite{patel2015non}, used an object's RF-emissions as fingerprints to authenticate ZigBee devices. As for mobile devices fingerprinting, there exist many other features other than the RF-emissions to fingerprint. The authors in~\cite{bojinov2014mobile} used features such as accelerometer calibration error and microphone distortion as fingerprints to identify mobile devices. Device fingerprinting is not necessarily implemented on features extracted from the device/object. For example in~\cite{radhakrishnan2013passive}, network traffic features, such as packet inter-arrival time, and delays between successive packets are used as features to identify devices/objects in a network of devices.

These existing device fingerprinting techniques face three main limitations. First, fingerprinting features are assumed to be the same across all devices in the system, such as in~\cite{patel2015non,bojinov2014mobile,radhakrishnan2013passive}. This assumption is not practical because an IoT system consists of a wide variety of object types with different features.
Second, in techniques such as~\cite{patel2015non}, devices are required to be connected directly to a central sensing node that extracts fingerprinting features. These sensing nodes are assumed to be resistant to attacks and device fingerprinting algorithms are not applied on these sensing nodes. Third, existing device fingerprinting  techniques such as~\cite{xu2016device} assume fingerprints are fixed and do not change over time. However, in an IoT environment, fingerprints of objects change with time due to multiple factors, such as changes in the surrounding environment, aging of objects, and noise, as shown in~\cite{bertoncini2012wavelet}. To the best of our knowledge, these changes in fingerprints have not been exploited as a feature to authenticate objects in IoT systems.
Therefore, a direct implementation of existing fingerprinting techniques, such as~\cite{patel2015non,bojinov2014mobile,radhakrishnan2013passive}, in practical IoT systems is not feasible.

The main contribution of this paper is a novel IoT object authentication framework that can distinguish between signals from legitimate IoT objects and signals from malicious objects. 
The proposed framework exploits the effects of the environment surrounding IoT objects to build a model for the expected environmental effects on each object in the IoT system. This model is used to distinguish remote attackers such as cyber emulation attackers. The environment model is able to detect highly intelligent attackers capable of replicating the exact fingerprints of IoT objects, we refer to this type of attacks as cyber-physical emulation attacks. The proposed framework tracks the changes in fingerprints for all IoT objects and utilizes the similarities in these changes to extract a model for the environment. Using the estimated environmental effects, emulation attackers that replicate IoT objects in a remote location will not be able to replicate all of the changes in the environment hence allowing our approach to effectively detect them. The novel environmental estimation in our framework enhances the authentication of legitimate objects as well which yields to high detection rate for cyber emulation attacks without increasing the false positives of misclassified legitimate objects. Additionally, our framework uses transfer learning to estimate the environment from objects of different types or objects with different feature spaces. The ability to use objects of different feature spaces is important in real IoT systems with wide diversity of objects. 
To our knowledge, this work is the \emph{first to exploit the changes in environmental effects on IoT object fingerprints}. Simulation results using real IoT device data show that our proposed framework enhances the detection rate of cyber emulation attacks. Moreover, the results show that conventional methods of device fingerprinting are unable to detect cyber-physical emulation attackers while our proposed framework were able to detect these type of attacks. The results shows an improvement of 40\% in cyber emulation attacks detection. The transfer learning results shows an improvement of 70\% when our framework uses different types of objects compared to using our framework only on objects with the exact same feature space.

The rest of this paper is organized as follows. Section~\ref{sec2} presents the system model and the proposed framework. Section~\ref{sec:estenv} describes the proposed environmental estimation approach. In Section~\ref{sec:tranlern}, we present the transfer learning approach. Simulation results and evaluation are presented in Section~\ref{sec:results} while conclusions are drawn in Section~\ref{sec:end}.

\vspace{-0.1cm}
\section{System Model and Proposed Framework}\label{sec2}

\begin{figure}[!t]
  \begin{center}

	\centering
 	   \includegraphics[width=7.7cm]{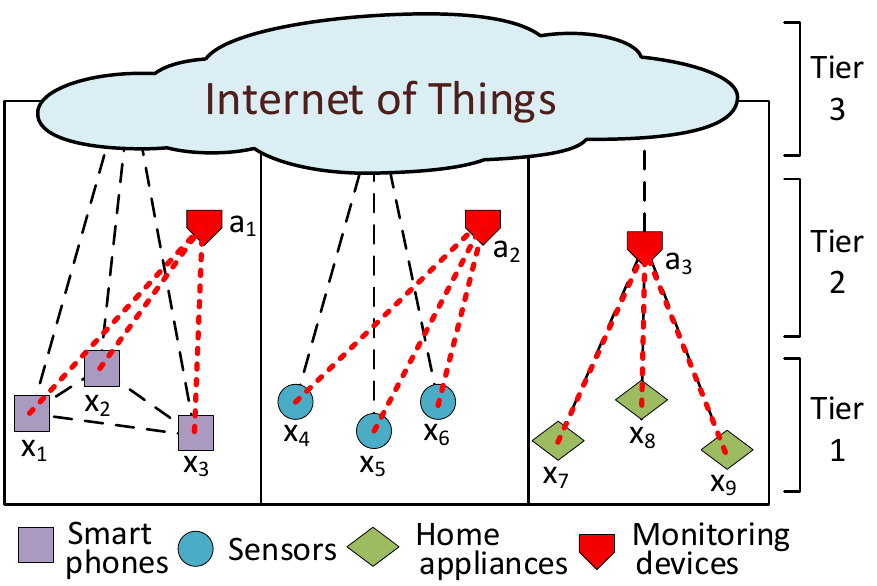}
%
  
  \caption{\small{\label{fig:samplesys} Illustrative example of the considered IoT system consisting of three tiers: objects tier, monitoring devices tier, and cloud tier. The first tier in the example is showing three types of objects: smart phones, sensors, and home appliances, along with three possible setups for the monitoring devices.}}

  \end{center}
\end{figure}

Consider an IoT system consisting of $N$ heterogeneous objects $x_1,x_2,\dots,x_N$  as shown in Fig.~\ref{fig:samplesys}. These objects can  represent any type of IoT devices such as sensors, smartphones, home appliances, or RFID tags~\cite{dawy2017toward}.
Each object transmits data to a gateway router through a wired or wireless link, such as 802.11 or Zigbee. The gateway aggregates the data transmitted from the IoT objects and forwards it to the cloud where a security service provider (SP) has access to the transmitted data. Security SP is a client on the cloud that generates and enforces network access for connected IoT objects. The SP authenticates the transmitted data from IoT objects based on device-specific information, called \emph{fingerprints}, that uniquely identifies each object in the IoT system. For example, wavelet-based features such as mean, variance, and skewness of electronic codes are possible fingerprints for RFID tags~\cite{bertoncini2012wavelet}. Such wavelet-based features of the electronic codes of RFID are unique for each RFID tag even if the tags are made by the same factory with the same specifications due to hardware impairments during the manufacturing process~\cite{danev2012physical}.

In this system, an adversary attempts to impersonate a legitimate IoT object to inject tampered information into the IoT system. We consider two classes of adversaries: First, adversaries capable of emulating the software of a legitimate IoT object, which includes emulating security keys, device addresses, and transmitted data type. However this type of adversaries is not able to replicate the hardware features of a legitimate IoT object. Hereinafter, we will refer to this class of attacks as \emph{cyber emulation attack}. The second class of adversaries are capable of emulating both the software and hardware of a legitimate IoT object. This class of attackers is assumed to be highly skilled and is able to replicate the object's software, such as object security keys and object network address, and also clone the legitimate object's device-specific information such as transmission speed, signal strength, processing speed, and operating temperature and humidity. We will refer to this class of attacks as \emph{cyber-physical emulation attack}. These two kinds of attacks compromise the security of the IoT system by replicating the signals and/or fingerprints of legitimate IoT objects; therefore, authenticating the received objects fingerprints based on features that attackers cannot replicate is important to overcome emulation attacks. One of the features that emulation attackers cannot replicate are the environmental changes that pertain the environment that surrounds IoT objects, such as changes in temperature, humidity, wind, physical displacement, or any physical changes affecting IoT objects. These environmental changes are time dependent and are constantly changing which makes it difficult to replicate by a remote emulation attacker. In addition, cyber-physical emulation attacks are difficult and computationally intensive, hence these kind of attacks cannot be updated on the fly to cope with the environmental changes even if the attacker manages to consistently monitor a certain object~\cite{edman2009active}.

To thwart such emulation attacks, we propose a framework whose goal is to determine whether an IoT object is legitimate or not by analyzing the hardware and software features of objects, this can stop cyber attackers that cannot replicate the hardware of IoT objects. Moreover, by analyzing the changes in objects' features over time due to environmental effects, we can thwart cyber-physical attacks that can replicate the hardware and software of IoT objects but cannot replicate the exact environment surrounding the legitimate objects. The proposed framework estimates the changes on the object-specific information with time which are caused by environment effects on IoT objects. Then, the framework compares these estimates with the actual changes in the object-specific information to determine whether the object is physically present in the environment or located remotely by an adversary.

The proposed framework consists of five main components as shown in Fig.~\ref{fig:frame}: \emph{feature extraction}, \emph{fingerprint generation}, \emph{similarity measure}, \emph{environment estimation}, and \emph{transferring knowledge}. In order to estimate the environmental effects on IoT object fingerprints, first we need to extract the features from each IoT object and generate unique fingerprints from the extracted features. In the following subsection, we explain in detail the process of feature extraction and generation, along with the similarity measure used to compare fingerprints.

\begin{figure}[!t]
  \begin{center}

	\centering
 	   \includegraphics[width=8.5cm]{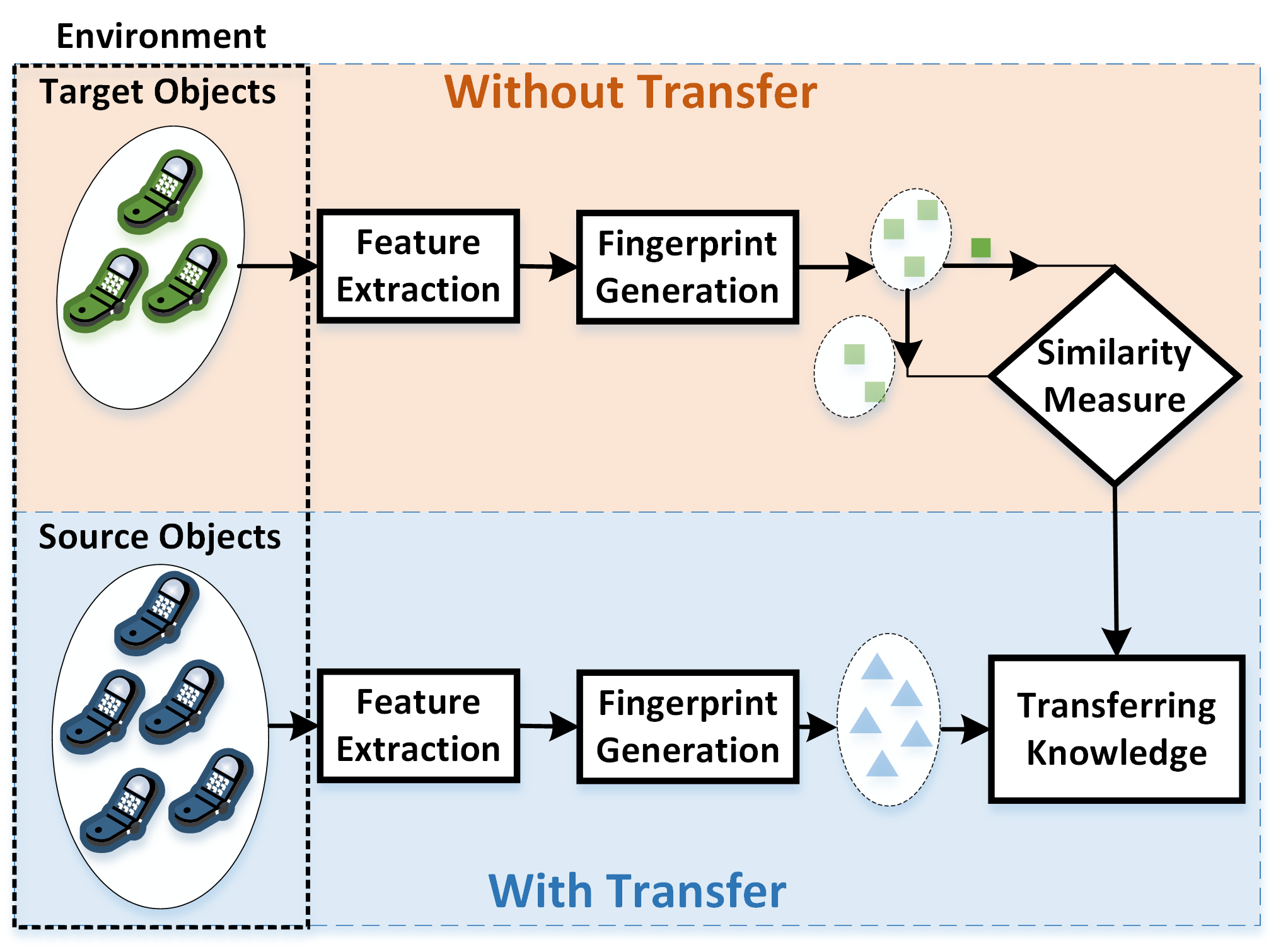}
%
  
  \caption{\small{\label{fig:frame} The main components of the IoT environment estimation framework}}

  \end{center}
\end{figure}

\vspace{-0.2cm}
\subsection{Feature extraction}IoT object features are collected at three different tiers as shown in Fig.~\ref{fig:samplesys}. The first tier of features is collected at the object level. Each object in the IoT system sends a set of features describing the operational status of the object itself along with all sensor measurements available. Some of the features that can be collected at the object level are CPU load, clock skew, memory usage, and temperature of the object and/or surroundings, among others~\cite{borgia2014internet}. The second tier of feature collection is done at security monitoring objects. These monitoring objects are distributed over the IoT  to gather features about other objects in the IoT system. Gateways are examples of monitoring objects that capture traffic properties of other objects. Some of the features that the monitoring objects can collect are signal strength, signal spectral features, and packet arrival times. The monitoring objects are also considered IoT objects, hence they collect and send first tier features about themselves to the IoT cloud center as well. The third tier of features is collected at the IoT server side by measuring the traffic properties of objects and frequency of received packets. 

\vspace{-0.2cm}
\subsection{Fingerprint generation}The information collected during the feature extraction stage is general and not discriminant for successful object authentication. Hence, we use statistical analysis to select features that are unique for each object. Statistical analysis requires capturing a decent amount of features from objects to form a training dataset. The training dataset is used to find the subset of features that can uniquely identify each object in the IoT system. This subset of discriminant features, called \emph{fingerprint}, forms a vector defined as:
\begin{align}\label{eq:f1}
\boldsymbol{f}^{i,t}_k=[\delta_{k,1},\delta_{k,2},\delta_{k,3},\dots,\delta_{k,m}],
\end{align}
where $\boldsymbol{f}^{i,t}_k$ is the fingerprint vector for object $i$ at time $t$, $\delta_{k,j}$ is the scalar value of a feature $j$ of the fingerprint $\boldsymbol{f}^{i,t}_k$ such as the signal mean feature, and $m$ is the total number of features for object $i$.

\vspace{-0.2cm}
\subsection{Similarity measure}During a time interval $t$, each IoT object $i$ generates $n$ fingerprints $\boldsymbol{f}^{i,t}_1,\boldsymbol{f}^{i,t}_2,\dots,\boldsymbol{f}^{i,t}_n$. Hence, at a time interval $t$, the collected fingerprints of object $i$ form an $n\times m$ matrix $\boldsymbol{F}_{i,t}$ given by:
\begin{align}\label{eq:f2}
\boldsymbol{F}_{i,t} = [\boldsymbol{f}^{i,t}_1,\boldsymbol{f}^{i,t}_2,\dots,\boldsymbol{f}^{i,t}_n]^T,
\end{align}
\begin{align}\label{eq:fmat}
\boldsymbol{F}_{i,t}=\begin{bmatrix}
    \delta_{1,1}       & \delta_{1,2} & \delta_{1,3} & \dots & \delta_{1,m} \\
    \delta_{2,1}       & \delta_{2,2} & \delta_{2,3} & \dots & \delta_{2,m} \\
    \dots &\dots& \dots& \dots& \dots \\
    \delta_{n,1}       & \delta_{n,2} & \delta_{n,3} & \dots & \delta_{n,m}
\end{bmatrix}.
\end{align}

The similarity measure block in Fig.~\ref{fig:frame} at each time interval $t$ measures the distance between the fingerprint matrix $\boldsymbol{F}_{i,t}$ for object $i$ and a reference fingerprint $\overset{*}{\boldsymbol{F}}_{i}$ for the same object $i$. The reference fingerprint $\overset{*}{\boldsymbol{F}}_{i}$ is determined based on previously collected training data. The labeled training data used to determine the reference fingerprints $\overset{*}{\boldsymbol{F}}_{i}$ consists of historical instances of fingerprints for objects in the IoT system. The longer the historical data available for training, the easier it is to spot and remove trends in the data which are due to environmental changes. The reference fingerprints are then chosen from the training dataset that best represent each object in the IoT system.
One of the distance measures to compare two sets of fingerprints is the Bhattacharyya distance measure as follows:

\vspace{-0.3cm}
\begin{align}
\Delta_{i,t} = D_B(\boldsymbol{F}_{i,t},\overset{*}{\boldsymbol{F}}_{i}),\label{eqn:2}
\end{align}
where $\Delta_{i,t}$ is the distance between the collected fingerprint matrix $\boldsymbol{F}_{i,t}$ and the reference fingerprint $\overset{*}{\boldsymbol{F}}_{i}$, $D_B$ is the Bhattacharyya distance measure~(BDM)~\cite{kailath1967divergence}. Other distance measures can be used instead of the BDM to measure $\Delta_{i,t}$. However, the Bhattacharyya distance, unlike other measures such as the KS-test, Hellinger distance, or KL-divergence, can be applied to any type of distribution and can also be applied to both univariate and multivariate distributions~\cite{evren2012some}.
The authentication of any IoT object $i$ is then modeled as a binary hypothesis test:

\begin{equation}\label{eqn:hyp}
\begin{cases}
\quad\mathcal{H}_0:&\Delta_{i,t}\leq \tau,\\
\quad\mathcal{H}_1:&\Delta_{i,t}> \tau,
\end{cases}
\end{equation}
where $\tau$ is a similarity threshold. Therefore, if the difference between $\boldsymbol{F}_{i,t}$ and $\overset{*}{\boldsymbol{F}}_{i}$ is less than $\tau$, then hypothesis $\mathcal{H}_0$ is claimed, and the collected fingerprints are from the legitimate object $i$. On the other hand, if the difference between $\boldsymbol{F}_{i,t}$ and $\overset{*}{\boldsymbol{F}}_{i}$ is larger than $\tau$, a potential attacker is detected and $\mathcal{H}_1$ is applied.
The Bhattacharyya distance used to measure the distance between the distributions $\boldsymbol{F}_{i,t}$ and $\overset{*}{\boldsymbol{F}}_{i}$ can be written as:
\begin{align}
D_B({ \boldsymbol{F}_{i,t},\overset{*}{\boldsymbol{F}}_{i}}) = -\text{ln}\left(\sum\sqrt{\boldsymbol{F}_{i,t},\overset{*}{\boldsymbol{F}}_{i}}\right).
\end{align}

For the special case where both distributions $\boldsymbol{F}_{i,t}$ and $\overset{*}{\boldsymbol{F}}_{i}$ are Gaussian distributions, the Bhattacharyya distance can be written as:
\begin{align}
D_B({ \boldsymbol{F}_{i,t},\overset{*}{\boldsymbol{F}}_{i}})=\frac{1}{8}(\boldsymbol{\mu}_{\scriptstyle \boldsymbol{F}_{i,t}}-&\boldsymbol{\mu}_{\scriptstyle{\overset{*}{\boldsymbol{F}}_{i}}})^T\boldsymbol{\Sigma}^{-1}(\boldsymbol{\mu}_{\scriptstyle \boldsymbol{F}_{i,t}}-\boldsymbol{\mu}_{\scriptstyle{\overset{*}{\boldsymbol{F}}_{i}}})\nonumber\\
&+\frac{1}{2}\ln{\frac{\det\boldsymbol{\Sigma}}{\sqrt{\det\boldsymbol{\Sigma}_{\scriptstyle \boldsymbol{F}_{i,t}}\det\boldsymbol{\Sigma}_{\scriptstyle{\overset{*}{\boldsymbol{F}}_{i}}}}}},
\end{align}
where  $\boldsymbol{\mu}_{\scriptstyle \boldsymbol{F}_{i,t}}, \boldsymbol{\mu}_{\scriptstyle{\overset{*}{\boldsymbol{F}}_{i}}},\boldsymbol{\Sigma}_{\scriptstyle \boldsymbol{F}_{i,t}}$ and $\boldsymbol{\Sigma}_{\scriptstyle{\overset{*}{\boldsymbol{F}}_{i}}}$ are the means and covariances for fingerprints $\boldsymbol{F}_{i,t}$ and $\overset{*}{\boldsymbol{F}}_{i}$ respectively, and $\boldsymbol{\Sigma}=(\boldsymbol{\Sigma}_{\scriptstyle_{\boldsymbol{F}_{i,t}}}+\boldsymbol{\Sigma}_{\scriptstyle_{\overset{*}{\boldsymbol{F}}_{i}}}) / 2$.
Hence, using the distance measure  $D_B$, the distance $\Delta_{i,t}$ between the collected fingerprint  $\boldsymbol{F}_{i,t}$ and the reference fingerprint $\overset{*}{\boldsymbol{F}}_{i}$ determines whether the collected fingerprint $\boldsymbol{F}_{i,t}$ belongs indeed to object $i$ or not. 

While generally it has been assumed that it is impossible to accurately replicate an object's physical fingerprint, an attacker can generate a ``close enough'' fingerprint to the physical object fingerprint. For example, in~\cite{edman2009active} a software-defined radio was used to construct radiometric signatures to impersonate an 802.11b wireless device. Therefore, relying on the similarity threshold $\tau$ alone to determine if fingerprints are legitimate or not is not enough, because it is possible to generate a malicious fingerprint $\boldsymbol{F}_m$ that satisfies $D_B(\boldsymbol{F}_m,\overset{*}{\boldsymbol{F}}_{i})\leq \tau$ which results in considering the malicious fingerprints as legitimate fingerprints. Additionally, reducing the value of the similarity threshold $\tau$ to detect the ``close enough'' fingerprints generated by malicious users can lead to an increase in the number of false positives since various noise sources affects the fingerprint generation and feature extraction processes. Hence, an accurate generation of object fingerprints is required in order to reduce the value of the similarity threshold $\tau$ without compromising the authentication accuracy, where the optimal value for the similarity threshold $\tau$ is the one that achieves the highest true positives and true negatives along with the least false positives and false negatives during the training period. The effect of choosing different values for the similarity threshold $\tau$ will be evaluated in Section~\ref{sec:results}. After generating fingerprints for each IoT objects and defining a similarity measure to compare between fingerprints, we next explain the process used to estimate the environmental effects on objects from the generated fingerprints.

\vspace{-0.2cm}
\section{Environment Estimation}\label{sec:estenv}
During the feature extraction process, object fingerprint features are affected by multiple effects, such as signal interference, electromagnetic radiation, network traffic, and noise. While precisely estimating noise sources is usually not possible due to their random nature and undefined spectrum, having a shared noise source between more than one object can help estimating the amount of shared noise. For example, objects in close proximity to each other are affected by the same environmental changes which affects the fingerprint features for these objects. For example, the authors in~\cite{bertoncini2012wavelet} showed that water submersion and physical crumpling effects on RFID tags impacted the RFID signal of neighboring tags. The environmental effects on a given object's fingerprint features can be represented as a transformation of fingerprints $T$, and, hence, the generated fingerprint matrix $\boldsymbol{F}_{i,t}$ for an object $i$ in~(\ref{eq:fmat}) is:
\begin{align}\label{eq:1}
\hat{\boldsymbol{F}}_{i,t}=T^{-1}(\boldsymbol{F}_{i,t}),
\end{align}
where $\hat{\boldsymbol{F}}_{i,t}$ is the estimated fingerprint matrix without the effect of environment and $T^{-1}$ is the inverse of the transformation $T$. The estimated fingerprint matrix $\hat{\boldsymbol{F}}_{i,t}$ is a more accurate estimation of the fingerprint matrix $\boldsymbol{F}_{i,t}$ because it excludes the effects of environment and is closer to the reference fingerprint matrix $\overset{*}{\boldsymbol{F}}_{i}$. 
The environmental effects on objects are caused by many physical factors such as ambient temperature, humidity, wind, and many other factors surrounding the objects. Most of these effects are non-linear which makes the overall environmental effect on objects non-linear and impossible to predict or estimate.  However, non-linear high-order estimation of the environmental effects suffers from high variance which results in an overfitting transformation. Meanwhile, linear estimations have low variance and high bias which makes the estimation more generalized. here, our goal is to find an estimation of the environmental effects that can impact multiple objects. Hence, a more generalized estimation, such as a linear model, is more likely to capture the common effect that impacts multiple objects at the same time based on the bias-variance tradeoff. Therefore, we consider a linear model to estimate the environmental effects on multiple objects in our framework. This assumption reduces the risk of overfitting to a single object by considering  low-order environmental effects. Consequently, the transformation $T$ can be defined as a rotation and a translation:
\begin{align}\label{eqn:fhat10}
\boldsymbol{F}_{i,t}=\boldsymbol{R}_{i,t}\, \hat{\boldsymbol{F}}_{i,t} + \boldsymbol{l}_{i,t}+\boldsymbol{w},
\end{align}
where $\boldsymbol{R}_{i,t}$and  $\boldsymbol{l}_{i,t}$ are the rotation and translation matrices for object $i$ at time $t$ which transform the fingerprint matrix $\hat{\boldsymbol{F}}_{i,t}$ to align it with the collected fingerprint matrix $\boldsymbol{F}_{i,t}$, and $\boldsymbol{w}$ is a noise which can be a combination of thermal, static, solar, or other noise sources that are purely random and cannot be estimated. Since the noise $\boldsymbol{w}$ cannot be estimated we will ignore the noise during the estimation of the environmental effects in this section.

To find the transforms $\boldsymbol{R}_{i,t}$ and $\boldsymbol{l}_{i,t}$ for object $i$, we use the fingerprint matrices of the objects surrounding the object $i$ to estimate the environment effect on  object $i$. To determine which objects are considered to be surrounding a given object $i$, we use a network graph as shown in Fig.~\ref{fig:net}, where nodes represent IoT objects and edges $\beta_k$ represent the environment similarity between objects. The environment similarity $\beta_k$ between a pair of objects represents the amount of similarity between the environmental effects for each object. The environment similarity $\beta_k$ defers from one pair of objects to the other based on the physical location of objects, and the type of objects, where objects of the same type and located in a close proximity to each other have higher environment similarity between them compared to objects located far apart or objects of different types. The network graph is generated during the training stage by measuring the similarity between environmental effects on objects. For example, to estimate the environmental effects on object $i$ shown in Fig.~\ref{fig:net}, we extract from the full IoT system graph shown in Fig.~\ref{fig:net}(a) a subgraph $\mathcal{G}$ of all of the objects having direct environmental similarity with object $i$ shown in Fig.~\ref{fig:net}. We refer to these objects as neighbors of object $i$. Hence, from ~(\ref{eqn:fhat10}) the fingerprints for each neighboring object can be given by:
\begin{align}
\boldsymbol{F}_{k,t}=\boldsymbol{R}_{k,t}\, \hat{\boldsymbol{F}}_{k,t} + \boldsymbol{l}_{k,t};\,\,\,\,\,\,\,\,\, \forall k \in \mathcal{G}\setminus \{i\},
\end{align}
where $\mathcal{G}\setminus \{i\}$ is the set of objects in the subgraph shown in Fig.~\ref{fig:net}(b) except the object $i$, i.e., the set of all the neighbors of object $i$. 
\begin{figure}[!t]
  \begin{center}
	\centering
 	   \includegraphics[width=6.5cm]{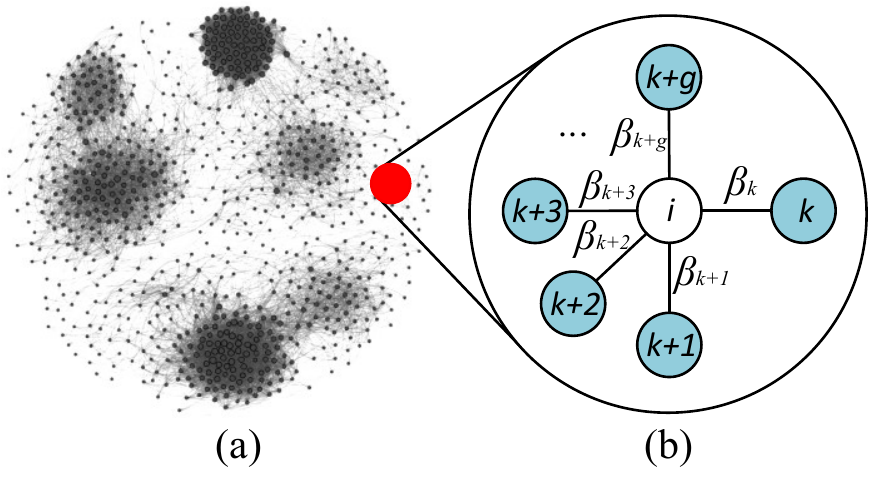}
  	  \caption{\label{fig:net} (a) The network graph of the whole IoT system (b) The graph of the objects having direct environment similarity with object $i$, where $\beta_k$ is the amount of similarity}
  \end{center}
\end{figure}

In order to find the environment effects on object $i$ we assume that the fingerprints received from the neighboring objects are mostly from legitimate objects. This assumption is justified in an IoT setting due to the following reasons. First, the number of objects in the IoT is large which makes it hard for an adversary to attack a significant number of objects at the same time. Second, the attacker does not know which objects are considered by the authentication mechanism as neighbors to a specific object in order to focus its attack resources on the neighboring objects. The reason attackers cannot determine the neighbors of a certain object is due to the nature of the process used to select the neighbors for each object in the IoT system. The neighbors of a target object are determined by analyzing the historical data of all the objects within a certain distance from the target object and finding all the objects that share similar behaviors regardless of their type or exact distance from the target object. This process to choose neighbors is impossible for the attacker to replicate since it requires having access to all the historical data from all the objects in the IoT and also requires knowing the parameters used by the framework behavior analysis. Third, since the network graph is highly connected, even if the attacker manages to attack all of the neighboring objects of a target object in order to cause a wrong environment estimation at the target object, the attack on the neighboring objects will be detected during the estimation process of other target objects in the network. For example, if a target object has five neighboring objects all controlled by a single attacker which allows the attacker to trigger wrong environment estimations at the target object. However, each one of the five neighboring objects will also be participating as a neighbor in the estimation process of other target objects. These other target objects have legitimate objects as neighbors besides the attacked objects which will help to reveal those attacked objects.
Therefore, based on the previous discussions, the fingerprints of each neighboring object $k$ can be given by:
\begin{align}
\boldsymbol{F}_{k,t}=\boldsymbol{R}_{k,t}\, \overset{*}{\boldsymbol{F}}_{k} + \boldsymbol{l}_{k,t};\,\,\,\,\,\,\,\,\, \forall k \in \mathcal{G}\setminus \{i\},
\end{align}
where $\overset{*}{\boldsymbol{F}}_{k}$ is the reference fingerprint for the neighboring objects. There are many ways to find the value of the transforms $\boldsymbol{R}_{k,t}$ and $\boldsymbol{l}_{k,t}$ given the the fingerprint matrices $\boldsymbol{F}_{k,t}$ and the reference fingerprint matrices $\overset{*}{\boldsymbol{F}}_{k}$ of any object $k \in \mathcal{G}\setminus \{i\}$. One such approach is the singular value decomposition (SVD) as follows:
\begin{align}
&[\boldsymbol{U}_{k,t},\boldsymbol{S}_{k,t},\boldsymbol{V}_{k,t}] = g\left(\sum_{j=1}^n (\boldsymbol{f}^{k,t}_j-\boldsymbol{c}_{{\overset{\phantom{*}}{\boldsymbol{F}}}_{k,t}})({\overset{\,\,\,\,\,\,*_{k,t}}{\boldsymbol{f}_j}}-\boldsymbol{c}_{{\overset{*}{\boldsymbol{F}}}_{k}})\!\! \right)\nonumber\\ &\,\,\,\,\,\,\,\,\,\,\,\,\,\,\,\,\,\,\,\,\,\,\,\,\,\,\,\,\,\,\,\,\,\,\,\,\,\,\,\,\,\,\,\,\,\,\,\,\,\,\,\,\,\,\,\,\,\,\,\,\,\,\,\,\,\,\,\,\,\,\,\,\,\,\,\,\,\,\,\,\,\,\,\,\,\,\,\,\,\,\,\,\,\,\,\,\forall k \in \mathcal{G}\setminus \{i\},
\end{align}
where $g(\cdot)$ is defined as the SVD function, $\boldsymbol{U}_{k,t},\boldsymbol{S}_{k,t}$ and $\boldsymbol{V}_{k,t}$ are the factorization matrices, $\boldsymbol{c}_{{\overset{\phantom{*}}{\boldsymbol{F}}}_{k,t}}$ is the mean or centroid of the fingerprint matrix $\boldsymbol{F}_{k,t}$, and $\boldsymbol{c}_{{\overset{*}{\boldsymbol{F}}}_{k}}$ is the mean or centroid of the fingerprint matrix ${\overset{*}{\boldsymbol{F}}}_{k}$. Hence, the rotation $\boldsymbol{R}_{k,t}$ is derived from the factorization matrices as follows:
\begin{align}\label{eqn:ri}
 \boldsymbol{R}_{k,t}= \boldsymbol{V}_{k,t}\, \boldsymbol{U}_{k,t}^\top;\,\,\,\,\,\,\,\,\forall k \in \mathcal{G}\setminus \{i\}.
\end{align}
Meanwhile, the translation $\boldsymbol{l}_{k,t}$ is derived as:
\begin{align}\label{eqn:li}
 \boldsymbol{l}_{k,t}= -\boldsymbol{R}_{k,t} \; \boldsymbol{c}_{{\overset{\phantom{*}}{\boldsymbol{F}}}_{k,t}} + \boldsymbol{c}_{{\overset{*}{\boldsymbol{F}}}_{k}};\,\,\,\,\,\,\,\,\forall k \in \mathcal{G}\setminus \{i\},
\end{align}

The next step is to combine the rotation and translation matrices from all the neighboring objects (i.e. $\boldsymbol{R}_{k,t}$ and $ \boldsymbol{l}_{k,t}; \forall k \in \mathcal{G}\setminus \{i\}$) into a rotation matrix $\boldsymbol{R}_{i,t}$ and a translation matrix $\boldsymbol{l}_{i,t}$ that represent the environment effect on object $i$ given the fact that the environment effect on object $i$ is the same as the environment effect on the neighboring objects. One of the simplest ways to combine the rotation and translation matrices into an estimated rotation matrix and translation matrix is using the minimum mean square error~(MMSE) estimator that minimizes the square of the errors between the estimator  ($\boldsymbol{R}_{i,t}$ and $\boldsymbol{l}_{i,t}$) and all the estimated neighboring matrices ($\boldsymbol{R}_{k,t}$ and $\boldsymbol{l}_{k,t}; \forall k \in \mathcal{G}\setminus \{i\}$). Hence, the MMSE problem is:
\begin{align}\label{eqn:MMSE1}
\underset{\boldsymbol{R}_{i,t}}{\text{argmin}}\;\sum_{k \in \mathcal{G}\setminus \{i\}}\beta_k(\boldsymbol{R}_{i,t}-\boldsymbol{R}_{k,t})^2,\\\label{eqn:MMSE2}
\underset{\boldsymbol{l}_{i,t}}{\text{argmin}}\; \sum_{k \in \mathcal{G}\setminus \{i\}}\beta_k(\boldsymbol{l}_{i,t}-\boldsymbol{l}_{k,t})^2,
\end{align}
where $\beta_k$ is the amount of similarity between object $i$ and object $k$ as shown  in Fig.~\ref{fig:net}. By solving~(\ref{eqn:MMSE1}) and~(\ref{eqn:MMSE2}) for $\boldsymbol{R}_{i,t}$ and $\boldsymbol{l}_{i,t}$ and substituting in~(\ref{eqn:fhat10}) we get the estimated fingerprint matrix $\hat{\boldsymbol{F}}_{i,t}$ as follows:
\begin{align}
\boldsymbol{F}_{i,t}&=\boldsymbol{R}_{i,t}\, \hat{\boldsymbol{F}}_{i,t} + \boldsymbol{l}_{i,t},\\
\hat{\boldsymbol{F}}_{i,t}&=\boldsymbol{R}_{i,t}^{-1}\,\boldsymbol{F}_{i,t}-\boldsymbol{R}_{i,t}^{-1}\,\boldsymbol{l}_{i,t}.\label{eqn:fhat}
\end{align}

Consequently, the distance between the estimated fingerprints $\hat{\boldsymbol{F}}_{i,t}$ and the reference fingerprint matrix becomes:
\begin{align}
\hat{\Delta}_{i,t} = D_B(\hat{\boldsymbol{F}}_{i,t},\overset{*}{\boldsymbol{F}}_{i}),
\end{align}
where $\hat{\Delta}_{i,t}$ represents the estimated distance between the object fingerprints and the reference fingerprints after removing the effect of the environment. To avoid the matrix inversion in~(\ref{eqn:fhat}), the environment effect can be applied to the reference fingerprint $\overset{*}{\boldsymbol{F}}_{i}$ as follows:
\begin{align}
\overset{**}{\boldsymbol{F}}_{i}=\boldsymbol{R}_{i,t}\, \overset{*}{\boldsymbol{F}}_{i} + \boldsymbol{l}_{i,t}
\end{align}
where $\overset{**}{\boldsymbol{F}}_{i}$ is the updated fingerprint reference matrix for object $1$ that includes the effect of environment. Thus, the fingerprint matrix $\boldsymbol{F}_{i,t}$ can be compared with  $\overset{**}{\boldsymbol{F}}_{i}$ since both matrices include the effect of the environment. The estimated distance in this case will be as follows:
\begin{align}\label{eqn:star}
\hat{\Delta}_{i,t} = D_B(\boldsymbol{F}_{i,t},\overset{**}{\boldsymbol{F}}_{i}).
\end{align}

 To show the advantage of removing the effect of environment by using the estimated distance $\hat{\Delta}_{i,t}$ instead of the distance $\Delta_{i,t}$, we consider the two attack scenarios introduced in Section~\ref{sec2} and compare between the environment estimation case and the no environment estimation case.

In the cyber emulation attack case, the fingerprints of the malicious messages sent from the attacker represent the attacker's device fingerprints $\boldsymbol{F}_{m,t}$. Therefore, using classical device fingerprinting techniques, the fingerprint of the attacker $\boldsymbol{F}_{m,t}$ is compared with the reference fingerprint ${\overset{*}{\boldsymbol{F}}}_{k}$ for the target object as in: 
\begin{align}
\Delta_{m,t} = D_B(\boldsymbol{F}_{m,t},{\overset{*}{\boldsymbol{F}}}_{i}),
\end{align}
Based on the value of $\Delta_{m,t}$ we differentiate between two cases: 
\begin{itemize}
\item $\Delta_{m,t}\leq\tau$: the attack is successful. Malicious messages sent by the attacker are treated as messages  sent from the target object.
\item $\Delta_{m,t}>\tau$: the attack is not successful. Malicious messages sent by the attacker are different from the messages sent from the target object and attack flag is raised.  
\end{itemize}

Conventional device fingerprinting techniques choose large values of the threshold $\tau$ to avoid false positives when the environmental effects on legitimate objects are enough to make $\Delta_{i,t}$ larger than the threshold $\tau$ as shown in \cite{bertoncini2012wavelet}. However, large values of $\tau$ increase the chance of a successful attack since $\Delta_{m,t}<\tau$ for a successful attack. Thus, the estimation of the expected environmental effects on legitimate objects is essential to reduce the value of the threshold $\tau$ without increasing the false positives. In the following theorem, we derive a closed-form solution which proves that environment estimation during object fingerprinting in the IoT improves detection rate without increasing the false positives rate. In this theorem, we consider the distributions of the fingerprints to be Gaussian in order to use the closed-form expression of the Bhattacharyya distance. However, in the evaluation section, we show that the result will still hold for any distribution of fingerprints.
\begin{theorem}
Environment estimation during object fingerprinting in the IoT improves detection rate without increasing the false positives rate.
\end{theorem}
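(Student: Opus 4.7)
The plan is to compare the two decision rules side by side: the baseline test that thresholds $\Delta_{i,t} = D_B(\boldsymbol{F}_{i,t}, \overset{*}{\boldsymbol{F}}_{i})$ against a value $\tau$, versus the proposed test that thresholds $\hat{\Delta}_{i,t} = D_B(\boldsymbol{F}_{i,t}, \overset{**}{\boldsymbol{F}}_{i})$ against a possibly different value $\hat{\tau}$, where $\overset{**}{\boldsymbol{F}}_{i} = \boldsymbol{R}_{i,t}\overset{*}{\boldsymbol{F}}_{i} + \boldsymbol{l}_{i,t}$ by equation~(\ref{eqn:star}). I would set up the argument as a direct comparison of two ROC points, and work in the Gaussian closed form so the two terms of $D_B$ (a Mahalanobis mean-difference term and a log-det covariance term) are manipulable in closed form.

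First I would handle the legitimate case. Under $\mathcal{H}_0$, the collected fingerprint really is a rotated-translated version of the reference plus noise, $\boldsymbol{F}_{i,t} = \boldsymbol{R}_{i,t}\overset{*}{\boldsymbol{F}}_{i} + \boldsymbol{l}_{i,t} + \boldsymbol{w}$. Substituting the Gaussian formula, the mean-difference term of $D_B(\boldsymbol{F}_{i,t}, \overset{*}{\boldsymbol{F}}_{i})$ equals $\tfrac{1}{8}(\boldsymbol{R}_{i,t}\boldsymbol{\mu}_{\overset{*}{\boldsymbol{F}}_i} + \boldsymbol{l}_{i,t} - \boldsymbol{\mu}_{\overset{*}{\boldsymbol{F}}_i})^\top \boldsymbol{\Sigma}^{-1} (\cdot)$, which is strictly positive whenever the environment imposes any non-identity transformation, whereas the same term of $D_B(\boldsymbol{F}_{i,t}, \overset{**}{\boldsymbol{F}}_{i})$ collapses to zero because the two means now coincide. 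An analogous cancellation happens in the covariance term because rotation preserves determinants and $\overset{**}{\boldsymbol{F}}_i$ shares the rotated covariance of $\boldsymbol{F}_{i,t}$. Hence $\hat{\Delta}_{i,t} \le \Delta_{i,t}$ for every legitimate sample, with strict inequality whenever $(\boldsymbol{R}_{i,t}, \boldsymbol{l}_{i,t}) \neq (\boldsymbol{I}, \boldsymbol{0})$. This lets me choose $\hat{\tau} \le \tau$ such that the false-positive probability $\Pr[\hat{\Delta}_{i,t} > \hat{\tau}\mid \mathcal{H}_0] = \Pr[\Delta_{i,t} > \tau\mid \mathcal{H}_0]$; that is, I can \emph{shrink} the threshold without raising false positives.

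Second, I would plug this new threshold into the attacker case. An emulation attacker's fingerprint $\boldsymbol{F}_{m,t}$ does not undergo the physical transformation $(\boldsymbol{R}_{i,t}, \boldsymbol{l}_{i,t})$ because, as argued around~(\ref{eqn:fhat10}), that transformation is generated by on-site environmental factors and cannot be replicated remotely. Consequently the Mahalanobis term in $D_B(\boldsymbol{F}_{m,t}, \overset{**}{\boldsymbol{F}}_{i})$ picks up the extra offset $\boldsymbol{R}_{i,t}\boldsymbol{\mu}_{\overset{*}{\boldsymbol{F}}_i} + \boldsymbol{l}_{i,t} - \boldsymbol{\mu}_{\overset{*}{\boldsymbol{F}}_i}$ relative to $D_B(\boldsymbol{F}_{m,t}, \overset{*}{\boldsymbol{F}}_{i})$, so in expectation $\hat{\Delta}_{m,t} \ge \Delta_{m,t}$. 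Combined with $\hat{\tau}\le \tau$, the detection probability satisfies $\Pr[\hat{\Delta}_{m,t} > \hat{\tau}] \ge \Pr[\Delta_{m,t} > \tau]$, which is the desired conclusion.

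The main obstacle I anticipate is the attacker step: we have no distributional model for $\boldsymbol{F}_{m,t}$, so I cannot literally compare two probabilities. I plan to finesse this by arguing at the level of the test statistic itself — showing pointwise that $\hat{\Delta}_{m,t} - \hat{\tau} \ge \Delta_{m,t} - \tau$ under the assumption that the attacker's fingerprint is statistically independent of $(\boldsymbol{R}_{i,t}, \boldsymbol{l}_{i,t})$ — which implies the inequality on detection probabilities by monotonicity of measures. A secondary subtlety is whether the covariance term of $D_B$ could go the wrong way; I will control it by noting that $\boldsymbol{R}_{i,t}$ is orthogonal (being the rotation obtained from the SVD in~(\ref{eqn:ri})), so $\det\boldsymbol{\Sigma}_{\overset{**}{\boldsymbol{F}}_i} = \det\boldsymbol{\Sigma}_{\overset{*}{\boldsymbol{F}}_i}$ and the log-det term is unchanged between the two tests, isolating all the gain in the mean-difference term where the argument above applies cleanly.
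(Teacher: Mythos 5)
Your legitimate-case argument is, in substance, the paper's own proof. The paper likewise works in the Gaussian closed form of $D_B$, substitutes $\boldsymbol{F}_{i,t}=\boldsymbol{R}_{i,t}\,\overset{*}{\boldsymbol{F}}_{i}+\boldsymbol{l}_{i,t}+\boldsymbol{w}$, and shows that the lower bound on the admissible threshold drops from $\frac{1}{8}(\boldsymbol{l}_{i,t}+\boldsymbol{w})^T\boldsymbol{\Sigma}^{-1}(\boldsymbol{l}_{i,t}+\boldsymbol{w})$ without estimation (see~(\ref{eqn:lowb})) to $\frac{1}{8}\boldsymbol{w}^T\boldsymbol{\Sigma}^{-1}\boldsymbol{w}$ with the transformed reference $\overset{**}{\boldsymbol{F}}_{i}$ (see~(\ref{eqn:lowbe})), concluding that a smaller threshold is feasible at the same false-positive rate. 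You are actually more careful than the paper in two spots: the paper silently drops both the rotation acting on the centroid and the log-det term, whereas you track them. One correction, though: equal determinants, $\det\boldsymbol{\Sigma}_{\scriptstyle{\overset{**}{\boldsymbol{F}}_{i}}}=\det\boldsymbol{\Sigma}_{\scriptstyle{\overset{*}{\boldsymbol{F}}_{i}}}$, is not what kills the log-det term --- the term $\frac{1}{2}\ln\bigl(\det\boldsymbol{\Sigma}/\sqrt{\det\boldsymbol{\Sigma}_1\det\boldsymbol{\Sigma}_2}\bigr)$ vanishes if and only if the two covariance \emph{matrices} coincide. It does vanish for the estimated test, since $\overset{**}{\boldsymbol{F}}_{i}$ carries the same rotated covariance as $\boldsymbol{F}_{i,t}$ (up to noise), and for the baseline test it is nonnegative by $\det\frac{\boldsymbol{A}+\boldsymbol{B}}{2}\geq\sqrt{\det\boldsymbol{A}\det\boldsymbol{B}}$ for positive definite matrices, so the inequality $\hat{\Delta}_{i,t}\leq\Delta_{i,t}$ survives and is even strengthened; just fix the justification.

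The attacker half is where you depart from the paper, and where your plan has a genuine hole. Note first that the paper never actually proves the detection-probability inequality in Theorem~1: having established $\tau_{LB}<\tau'_{LB}$, it asserts informally that a lower threshold improves detection (the attacker-side computation appears only in Theorem~2, and even there as a feasibility condition on a successful attack, not a probability bound). Your proposed finesse --- a pointwise inequality $\hat{\Delta}_{m,t}-\hat{\tau}\geq\Delta_{m,t}-\tau$ derived from independence of $\boldsymbol{F}_{m,t}$ and $(\boldsymbol{R}_{i,t},\boldsymbol{l}_{i,t})$ --- cannot work: independence is a distributional property and yields no pointwise statement. On any realization where $\boldsymbol{l}_{i,t}$ happens to point from the reference centroid toward the attacker's centroid, moving the reference to $\overset{**}{\boldsymbol{F}}_{i}$ strictly \emph{decreases} $\hat{\Delta}_{m,t}$, and nothing bounds that decrease by $\tau-\hat{\tau}$. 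What independence (plus a zero-mean or symmetric model for $\boldsymbol{l}_{i,t}$) buys you is $\mathbb{E}[\hat{\Delta}_{m,t}]\geq\mathbb{E}[\Delta_{m,t}]$ for the quadratic term --- an average, not the claimed $\Pr[\hat{\Delta}_{m,t}>\hat{\tau}]\geq\Pr[\Delta_{m,t}>\tau]$; to get the probability statement you would need an explicit stochastic-dominance assumption on the attacker's offset relative to the environment shift. In fairness, this gap is inherited from the theorem itself: your first half already reproduces everything the paper establishes rigorously, so the honest fix is either to state the detection-side step at the paper's informal level or to isolate the dominance condition as a named assumption rather than promising a pointwise proof that would fail.
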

\begin{proof}
From the hypothesis in~(\ref{eqn:hyp}), the detection rate of attackers can be represented as:
\begin{align}
P_{\textit{D}}&=P\big\{\Delta_{i,t}>\tau\big|\mathcal{H}_1\big\}.
\end{align}
When the environment effect on objects is not considered, the detection threshold $\tau'$ should satisfy the following condition:
\begin{align}
\tau' \geq D_B(\boldsymbol{F}_{i,t},\overset{*}{\boldsymbol{F}}_{i}).
\end{align}
This condition represents the best case scenario, which is when the collected fingerprints are generated from the legitimate object. This condition is necessary to avoid false positives. The collected fingerprints from ~(\ref{eqn:fhat}) can be represented as:
\begin{align}
\boldsymbol{F}_{i,t}=\boldsymbol{R}_{i,t}\, \overset{*}{\boldsymbol{F}}_{i} + \boldsymbol{l}_{i,t}+\boldsymbol{w},
\end{align}
where $\overset{*}{\boldsymbol{F}}_{i}$ is the reference fingerprint, $\boldsymbol{R}_{i,t}$ and $\boldsymbol{l}_{i,t}$ are the rotation and translation transformation which capture the effect of the environment, and $\boldsymbol{w}$ is the noise. The threshold $\tau'$ should be larger than the effect of noise and the effect of environment $(\boldsymbol{R}_{i,t},\boldsymbol{l}_{i,t})$ on the legitimate object, hence $\tau'$ is related to $\boldsymbol{R}_{i,t}$ and $\boldsymbol{l}_{i,t}$. The lower bound on the threshold $\tau'$ is:
\begin{align}
\tau'_{LB} &= D_B({ \boldsymbol{F}_{i,t},\overset{*}{\boldsymbol{F}}_{i}})\nonumber\\
&=\frac{1}{8}(\boldsymbol{\mu}_{\scriptstyle \boldsymbol{F}_{i,t}}-\boldsymbol{\mu}_{\scriptstyle{\overset{*}{\boldsymbol{F}}_{i}}})^T\boldsymbol{\Sigma}^{-1}(\boldsymbol{\mu}_{\scriptstyle \boldsymbol{F}_{i,t}}-\boldsymbol{\mu}_{\scriptstyle{\overset{*}{\boldsymbol{F}}_{i}}})\nonumber\\
&\qquad\qquad\qquad+\frac{1}{2}\ln{\frac{\det\boldsymbol{\Sigma}}{\sqrt{\det\boldsymbol{\Sigma}_{\scriptstyle \boldsymbol{F}_{i,t}}\det\boldsymbol{\Sigma}_{\scriptstyle{\overset{*}{\boldsymbol{F}}_{i}}}}}},\nonumber\\
&=\frac{1}{8}(\boldsymbol{\mu}_{\scriptstyle \boldsymbol{F}_{i,t}}-\boldsymbol{\mu}_{\scriptstyle{\overset{*}{\boldsymbol{F}}_{i}}})^T\boldsymbol{\Sigma}^{-1}(\boldsymbol{\mu}_{\scriptstyle \boldsymbol{F}_{i,t}}-\boldsymbol{\mu}_{\scriptstyle{\overset{*}{\boldsymbol{F}}_{i}}}),\nonumber\\
&=\frac{1}{8}(\boldsymbol{c}_{{\overset{\phantom{*}}{\boldsymbol{F}}}_{i,t}}-\boldsymbol{c}_{{\overset{*}{\boldsymbol{F}}}_{i}})^T\boldsymbol{\Sigma}^{-1}(\boldsymbol{c}_{{\overset{\phantom{*}}{\boldsymbol{F}}}_{i,t}}-\boldsymbol{c}_{{\overset{*}{\boldsymbol{F}}}_{i}}),
\end{align}
where  $\boldsymbol{\mu}_{\scriptstyle \boldsymbol{F}_{i,t}}, \boldsymbol{\mu}_{\scriptstyle{\overset{*}{\boldsymbol{F}}_{i}}},\boldsymbol{\Sigma}_{\scriptstyle \boldsymbol{F}_{i,t}}$ and $\boldsymbol{\Sigma}_{\scriptstyle{\overset{*}{\boldsymbol{F}}_{i}}}$ are the means and covariances for fingerprints $\boldsymbol{F}_{i,t}$ and $\overset{*}{\boldsymbol{F}}_{i}$ respectively, $\boldsymbol{\Sigma}=(\boldsymbol{\Sigma}_{\scriptstyle_{\boldsymbol{F}_{i,t}}}+\boldsymbol{\Sigma}_{\scriptstyle_{\overset{*}{\boldsymbol{F}}_{i}}}) / 2$, and $\boldsymbol{c}_{{\overset{\phantom{*}}{\boldsymbol{F}}}_{i,t}}$ and $\boldsymbol{c}_{{\overset{*}{\boldsymbol{F}}}_{i}}$ are the centroids of fingerprint matrices $\boldsymbol{F}_{i,t}$ and $\overset{*}{\boldsymbol{F}}_{i}$ respectively, hence:
\begin{align}
\tau'_{LB} &=\frac{1}{8}(\boldsymbol{c}_{{\overset{*}{\boldsymbol{F}}}_{i}}\!\!+\boldsymbol{l}_{i,t}+\boldsymbol{w}-\boldsymbol{c}_{{\overset{*}{\boldsymbol{F}}}_{i}})^T\boldsymbol{\Sigma}^{-1}(\boldsymbol{c}_{{\overset{*}{\boldsymbol{F}}}_{i}}\!\!+\boldsymbol{l}_{i,t}+\boldsymbol{w}-\boldsymbol{c}_{{\overset{*}{\boldsymbol{F}}}_{i}}),\notag\\
&=\frac{1}{8}(\boldsymbol{l}_{i,t}+\boldsymbol{w})^T\boldsymbol{\Sigma}^{-1}(\boldsymbol{l}_{i,t}+\boldsymbol{w}).\label{eqn:lowb}
\end{align}
Therefore, if the environment effect on objects $\boldsymbol{l}_{i,t}$ is high, the detection threshold $\tau'$ has to be increased to avoid false positives. However, by estimating the environmental effects on objects and adding these environmental effects to the reference fingerprint as in~(\ref{eqn:star}), the detection threshold becomes:
\begin{align}
\tau &\geq D_B(\boldsymbol{F}_{i,t},\overset{**}{\boldsymbol{F}}_{i}),\\
\tau &\geq D_B\Bigl((\boldsymbol{R}_{i,t}\, \overset{*}{\boldsymbol{F}}_{i} + \boldsymbol{l}_{i,t}+\boldsymbol{w}),(\boldsymbol{R}_{i,t}\, \overset{*}{\boldsymbol{F}}_{i} + \boldsymbol{l}_{i,t})\Bigr).
\end{align}
Similarly, when the received fingerprints belong to the legitimate source object, the lower bound on the threshold $\tau$ is:
\begin{align}
\tau_{LB} &= D_B(\boldsymbol{F}_{i,t},\overset{**}{\boldsymbol{F}}_{i}),\notag\\
&=\frac{1}{8}(\boldsymbol{c}_{{\overset{*}{\boldsymbol{F}}}_{i}}\!\!+\boldsymbol{l}_{i,t}+\boldsymbol{w}-\boldsymbol{c}_{{\overset{*}{\boldsymbol{F}}}_{i}}-\boldsymbol{l}_{i,t})^T\boldsymbol{\Sigma}^{-1}\notag\\
&\qquad\qquad\qquad\qquad(\boldsymbol{c}_{{\overset{*}{\boldsymbol{F}}}_{i}}\!\!+\boldsymbol{l}_{i,t}+\boldsymbol{w}-\boldsymbol{c}_{{\overset{*}{\boldsymbol{F}}}_{i}}-\boldsymbol{l}_{i,t}),\notag\\
&=\frac{1}{8}\boldsymbol{w}^T\boldsymbol{\Sigma}^{-1}\boldsymbol{w} < \tau'_{LB}.\label{eqn:lowbe}
\end{align}
Hence, the threshold lower bound $\tau_{LB}$ when the environment effects are estimated is only related to noise $\boldsymbol{w}$ and is lower than the threshold lower bound $\tau'_{LB}$ when the environment effects are not estimated. Therefore, estimating the environment allows us to pick a lower value for the threshold while keeping the same false positives rate. The ability to pick lower values for the threshold improves the detection rate since attackers are required to replicate the exact fingerprints of a legitimate object  and small changes compared with the legitimate objects will cause the estimation to be over the threshold.
\end{proof}

In the cyber-physical emulation attack case, the attacker can generate fingerprints $\boldsymbol{F}_{m,t}$ which are identical to the fingerprints generated from the attacked object.
Classical device fingerprinting techniques, such as in~\cite{7980167}, fail completely to detect such type of attacks, since classical techniques rely only on the reference fingerprint $\overset{*}{\boldsymbol{F}}_{i}$ to compare it with fingerprint $\boldsymbol{F}_{m,t}$. However, estimating the environment effect on objects requires the attacker to replicate both the legitimate object fingerprint $\boldsymbol{F}_{i,t}$ and the environment effect on the legitimate object at each time interval $t$. Next, we show that classical fingerprinting techniques cannot detect any cyber-physical emulation attackers while environmental estimation framework can detect such an attack. Similar to Theorem 1, we consider the distributions of the fingerprints to be Gaussian in order to use the closed-form expression of the Bhattacharyya distance. However, in the evaluation section,  we will see that the results will hold for any distribution.
\begin{theorem}
Classical fingerprinting techniques in the IoT cannot detect any cyber-physical emulation attackers.
\end{theorem}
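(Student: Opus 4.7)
The plan is to establish directly that whenever the adversary reproduces the legitimate object's fingerprint, the classical Bhattacharyya-based hypothesis test in~(\ref{eqn:hyp}) cannot reject, so the detection probability collapses to zero. There is essentially no calculation beyond what was already performed in Theorem~1; the substance lies in pinning down exactly which quantity the classical detector has access to.

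First, I would formalize the cyber-physical adversary. Because the attacker is assumed to replicate both the software and the hardware features of object $i$, the malicious fingerprint $\boldsymbol{F}_{m,t}$ is statistically indistinguishable from the legitimate fingerprint $\boldsymbol{F}_{i,t}$; under the Gaussian modelling assumption this is captured by the identities $\boldsymbol{\mu}_{\boldsymbol{F}_{m,t}}=\boldsymbol{\mu}_{\boldsymbol{F}_{i,t}}$ and $\boldsymbol{\Sigma}_{\boldsymbol{F}_{m,t}}=\boldsymbol{\Sigma}_{\boldsymbol{F}_{i,t}}$. I would state this explicitly as the formal translation of the ``cyber-physical'' assumption, since the entire argument hinges on it.

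Second, I would compute the classical test statistic. Because a classical detector has no side information beyond the reference $\overset{*}{\boldsymbol{F}}_{i}$ and the incoming matrix $\boldsymbol{F}_{m,t}$, its decision is driven entirely by $\Delta_{m,t}=D_B(\boldsymbol{F}_{m,t},\overset{*}{\boldsymbol{F}}_{i})$. Substituting the Gaussian closed-form expression for $D_B$ and using the mean/covariance identities above, I would show $\Delta_{m,t}=\Delta_{i,t}$, i.e.\ the attacker's statistic is bit-for-bit the same as the one the legitimate object would produce under the same environment. Third, I would invoke Theorem~1: to keep the false-positive rate at its nominal level the classical threshold must satisfy $\tau'\geq \tau'_{LB}$, and from~(\ref{eqn:lowb}) this bound is always at least as large as the distance $\Delta_{i,t}$ realized by the legitimate object. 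Combining with $\Delta_{m,t}=\Delta_{i,t}$ yields $\Delta_{m,t}\leq \tau'$, so hypothesis $\mathcal{H}_0$ is always accepted and
\[
P_{D}=P\{\Delta_{m,t}>\tau'\mid \mathcal{H}_{1}\}=0.
\]

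The main obstacle is conceptual rather than algebraic: one must argue carefully that a classical scheme carries \emph{no} auxiliary information, neither temporal structure, nor neighbour correlations, nor the environmental rotation/translation $(\boldsymbol{R}_{i,t},\boldsymbol{l}_{i,t})$, that could break the symmetry between $\boldsymbol{F}_{m,t}$ and $\boldsymbol{F}_{i,t}$ once they are statistically matched. A secondary point is to justify that the Gaussian closed-form is not essential: since $D_B$ depends only on the two distributions, identical distributions yield identical distances regardless of family, so the conclusion $P_{D}=0$ extends to the non-Gaussian case as well, matching the empirical claim made for the evaluation section.
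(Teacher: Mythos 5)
Your formalization of the cyber-physical adversary is stronger than the paper's, and that is where the proof breaks. In the paper's model the attacker clones the \emph{device}, i.e., it reproduces the reference fingerprint $\overset{*}{\boldsymbol{F}}_{i}$, but it transmits from a remote location, so the fingerprint it actually produces is $\boldsymbol{F}_{m,t}=\boldsymbol{R}_{m,t}\,\overset{*}{\boldsymbol{F}}_{i}+\boldsymbol{l}_{m,t}+\boldsymbol{w}$, with the attacker's \emph{own} environmental transformation $(\boldsymbol{R}_{m,t},\boldsymbol{l}_{m,t})$, which in general differs from the legitimate $(\boldsymbol{R}_{i,t},\boldsymbol{l}_{i,t})$. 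You instead postulate $\boldsymbol{\mu}_{\boldsymbol{F}_{m,t}}=\boldsymbol{\mu}_{\boldsymbol{F}_{i,t}}$ and $\boldsymbol{\Sigma}_{\boldsymbol{F}_{m,t}}=\boldsymbol{\Sigma}_{\boldsymbol{F}_{i,t}}$, i.e., that the attacker replicates the \emph{environment-affected} fingerprint, which in the paper's notation amounts to assuming $\boldsymbol{l}_{m,t}=\boldsymbol{l}_{i,t}$. Under that assumption your identity $\Delta_{m,t}=\Delta_{i,t}$ is immediate and $P_D=0$ follows, but you have only covered a degenerate subclass of the attackers the theorem quantifies over: for a genuine remote attacker with $\boldsymbol{l}_{m,t}\neq\boldsymbol{l}_{i,t}$ your key identity fails and the argument yields nothing. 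The paper handles the general case by a \emph{comparison} rather than an identity: substituting the false-positive-driven lower bound $\tau'_{LB}=\frac{1}{8}(\boldsymbol{l}_{i,t}+\boldsymbol{w})^{T}\boldsymbol{\Sigma}^{-1}(\boldsymbol{l}_{i,t}+\boldsymbol{w})$ from~(\ref{eqn:lowb}), the attack succeeds whenever $\frac{1}{8}(\boldsymbol{l}_{m,t}+\boldsymbol{w})^{T}\boldsymbol{\Sigma}^{-1}(\boldsymbol{l}_{m,t}+\boldsymbol{w})\leq\frac{1}{8}(\boldsymbol{l}_{i,t}+\boldsymbol{w})^{T}\boldsymbol{\Sigma}^{-1}(\boldsymbol{l}_{i,t}+\boldsymbol{w})$; the slack that the classical threshold must leave to tolerate legitimate environmental drift is exactly what absorbs the attacker's own drift, regardless of whether the two drifts coincide.

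Worse, your strengthened assumption proves too much and is inconsistent with the rest of the paper. If $\boldsymbol{l}_{m,t}=\boldsymbol{l}_{i,t}$, then the environment-compensated statistic $D_B(\boldsymbol{F}_{m,t},\overset{**}{\boldsymbol{F}}_{i})$ in~(\ref{eqn:star}) also collapses to $\frac{1}{8}\boldsymbol{w}^{T}\boldsymbol{\Sigma}^{-1}\boldsymbol{w}$, so the proposed framework would be equally blind to such an attacker --- contradicting the second half of the paper's own proof, where detection hinges precisely on the residual $\boldsymbol{l}_{m,t}-\boldsymbol{l}_{i,t}$ surviving the compensation (the condition $\frac{1}{8}(\boldsymbol{l}_{m,t}+\boldsymbol{w}-\boldsymbol{l}_{i,t})^{T}\boldsymbol{\Sigma}^{-1}(\boldsymbol{l}_{m,t}+\boldsymbol{w}-\boldsymbol{l}_{i,t})\leq\frac{1}{8}\boldsymbol{w}^{T}\boldsymbol{\Sigma}^{-1}\boldsymbol{w}$), and contradicting the evaluation, which explicitly uses the same object fingerprints \emph{under different environmental effects}. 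To repair the proof, replace the statistical-indistinguishability postulate with the paper's model $\boldsymbol{F}_{m,t}=\boldsymbol{R}_{m,t}\,\overset{*}{\boldsymbol{F}}_{i}+\boldsymbol{l}_{m,t}+\boldsymbol{w}$ and run the threshold comparison above; your closing observations --- that the classical detector has no side information beyond $(\boldsymbol{F}_{m,t},\overset{*}{\boldsymbol{F}}_{i})$, and that $D_B$ depends only on the two distributions so the Gaussian closed form is not essential --- are sound and worth keeping.
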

\begin{proof}
The detection rate of attackers as shown in (\ref{eqn:hyp}) is:
\begin{align}
P_{\textit{D}}&=P\big\{\Delta f>\tau\big|\mathcal{H}_1\big\}.
\end{align}
In replication attacks, the attacker generates fingerprints $\boldsymbol{F}_{m,t}$ where:
\begin{align}
\boldsymbol{F}_{m,t}=\boldsymbol{R}_{m,t}\, \overset{*}{\boldsymbol{F}}_{i} + \boldsymbol{l}_{m,t}+\boldsymbol{w}.
\end{align}
Without estimating the environmental effect, the attack is successful when:
\begin{align}
D_B(\boldsymbol{F}_{m,t},{\overset{*}{\boldsymbol{F}}}_{i})\leq\tau'.
\end{align}
Using the lower bound on the detection threshold $\tau'$ in~(\ref{eqn:lowb}), the condition for successful attack becomes as follows:
\begin{align}
&\qquad D_B(\boldsymbol{F}_{m,t},{\overset{*}{\boldsymbol{F}}}_{i})\leq\frac{1}{8}(\boldsymbol{l}_{i,t}+\boldsymbol{w})^T\boldsymbol{\Sigma}^{-1}(\boldsymbol{l}_{i,t}+\boldsymbol{w}),\notag\\
&\frac{1}{8}(\boldsymbol{c}_{{\overset{*}{\boldsymbol{F}}}_{i}}\!\!\!\!+\boldsymbol{l}_{m,t}\!\!+\boldsymbol{w}\!-\boldsymbol{c}_{{\overset{*}{\boldsymbol{F}}}_{i}}\!\!)^T\boldsymbol{\Sigma}^{-1}(\boldsymbol{c}_{{\overset{*}{\boldsymbol{F}}}_{i}}\!\!\!\!+\boldsymbol{l}_{m,t}\!\!+\boldsymbol{w}\!-\boldsymbol{c}_{{\overset{*}{\boldsymbol{F}}}_{i}}\!\!)\leq\notag\\
&\qquad\qquad\qquad\qquad\qquad\qquad\frac{1}{8}(\boldsymbol{l}_{i,t}+\boldsymbol{w})^T\boldsymbol{\Sigma}^{-1}(\boldsymbol{l}_{i,t}+\boldsymbol{w}),\notag\\
&\frac{1}{8}(\boldsymbol{l}_{m,t}\!\!+\boldsymbol{w}\!)^T\boldsymbol{\Sigma}^{-1}(\boldsymbol{l}_{m,t}\!\!+\boldsymbol{w}\!)\leq\frac{1}{8}(\boldsymbol{l}_{i,t}+\boldsymbol{w})^T\boldsymbol{\Sigma}^{-1}(\boldsymbol{l}_{i,t}+\boldsymbol{w}).
\end{align}
Hence, the attack is successful if the environment effect on the attacker's device $\boldsymbol{l}_{m,t}$ is not significantly larger than the environment effect on the legitimate object $\boldsymbol{l}_{i,t}$. On the other hand, by estimating the environment effects on fingerprints, the attack is only successful when:
\begin{align}
D_B(\boldsymbol{F}_{m,t},{\overset{**}{\boldsymbol{F}}}_{i})\leq\tau,
\end{align}
where $\tau$ is smaller than $\tau'$.  
By substituting the detection threshold $\tau$ from~(\ref{eqn:lowbe}), we have:
\begin{align}
&D_B(\boldsymbol{F}_{m,t},{\overset{**}{\boldsymbol{F}}}_{i})\leq\frac{1}{8}\boldsymbol{w}^T\boldsymbol{\Sigma}^{-1}\boldsymbol{w},\\
\frac{1}{8}(\boldsymbol{c}_{{\overset{*}{\boldsymbol{F}}}_{i}}\!\!\!\!+\boldsymbol{l}_{m,t}\!\!+\boldsymbol{w}-&\boldsymbol{c}_{{\overset{*}{\boldsymbol{F}}}_{i}}\!\!\!\!-\boldsymbol{l}_{i,t})^T\boldsymbol{\Sigma}^{-1}\notag\\
&(\boldsymbol{c}_{{\overset{*}{\boldsymbol{F}}}_{i}}\!\!\!\!+\boldsymbol{l}_{m,t}\!\!+\boldsymbol{w}-\boldsymbol{c}_{{\overset{*}{\boldsymbol{F}}}_{i}}\!\!\!\!-\boldsymbol{l}_{i,t})\leq\frac{1}{8}\boldsymbol{w}^T\boldsymbol{\Sigma}^{-1}\boldsymbol{w},\\
\frac{1}{8}(\boldsymbol{l}_{m,t}\!\!+\boldsymbol{w}-\boldsymbol{l}_{i,t})^T&\boldsymbol{\Sigma}^{-1}
(\boldsymbol{l}_{m,t}\!\!+\boldsymbol{w}-\boldsymbol{l}_{i,t})\leq\frac{1}{8}\boldsymbol{w}^T\boldsymbol{\Sigma}^{-1}\boldsymbol{w}
\end{align}
In this case, unless the environment effect on the attacker's device $\boldsymbol{l}_{m,t}$ is identical to the environment effect on the legitimate object $\boldsymbol{l}_{i,t}$ the attack is not successful. Therefore, the attacker is required to be physically in the same environment as the legitimate object which makes remote cyber-physical emulation attacks detectable.
\end{proof}
Therefore, by estimating the environment, our proposed approach is able to determine whether an object is present in the same environment as the surrounding objects or whether the object is located in a remote area. Additionally, the estimation process involves using historical data to determine which objects are considered as surrounding objects in the estimation process. Hence, attackers, such as cyber-physical emulation attackers, are unable to replicate the environment effects in their remote location which allows our approach to detect such attacks.

Objects in the IoT system have different types of features which requires our proposed approach to use a different feature space for each type of objects. In the following section, we introduce a transfer learning approach to allow our environment estimation to use objects of different types and objects that have different feature spaces.

\vspace{0.1cm}
\section{Transfer Learning for Environment Estimation}\label{sec:tranlern}
Estimating the environmental effects using only similarity measures requires all objects in the IoT system to have the same feature space, i.e. all fingerprints $\boldsymbol{f}^{i,t}_x$ in the fingerprint matrix $\boldsymbol{F}_{i,t}$ have the same number of features. The condition to have the same feature space for all fingerprints has many disadvantages. First, the number of objects that share the same environmental effects and have the same feature space can often be relatively small which reduces the size of the training data. Second, having fewer number of objects with the same feature space decreases the estimation accuracy of the environmental effects since estimation process involves MMSE estimator. Third, attackers can compromise the estimation of an object by attacking all the nearby objects that have the same feature space, i.e. objects of the same type.
Therefore, using objects of different types in the estimation process enhances the performance and increases the complexity and costs for attackers to determine which objects are used in the estimation process. One of the tools to combine multiple problems with different feature spaces is called \emph{transfer learning}~\cite{pan2010survey}.
Transfer learning is a tool to transfer knowledge gained while solving one task to improve the learning of a different but related task, where the primary task is referred to as target task and the related task as source task.

In environmental effect estimation, we denote the fingerprints of objects with the same feature space as target data $\boldsymbol{F}_{k,t};\forall k\in \mathcal{D_T}$ and the objects with different feature space but sharing the same environmental effects as source data $\boldsymbol{F}_{k,t}^s;\forall k\in \mathcal{D_S}$, where $\mathcal{D_T}$ is the set of target data and $\mathcal{D_S}$ is the set of source data. Additionally, we denote the target task as to estimate the transformations $\boldsymbol{R}_{i,t}$ and $\boldsymbol{l}_{i,t}$ for object $i$ given the target data $\boldsymbol{F}_{k,t};\forall k\in \mathcal{D_T}$. The transfer learning procedure consists of two steps. In the first step, using the source fingerprints $\boldsymbol{F}_{k,t}^s$ and following the same steps in Section~\ref{sec:estenv} we determine the transformation matrices $\boldsymbol{R}_{k,t}^s$ and $\boldsymbol{l}_{k,t}^s$ similar to~(\ref{eqn:ri}) and~(\ref{eqn:li}) where:
\begin{align}
\boldsymbol{F}_{k,t}^s=\boldsymbol{R}_{k,t}^s\, \overset{*}{\boldsymbol{F}^s_{k}} + \boldsymbol{l}_{k,t}^s;\,\,\,\,\,\,\,\,\, \forall k \in \mathcal{D_S}.\vspace{0.5cm}
\end{align}
Additionally, using the target fingerprints  $\boldsymbol{F}_{k,t}$ and using steps in Section~\ref{sec:estenv}, we get the transformation matrices $\boldsymbol{R}_{k,t}$ and $\boldsymbol{l}_{k,t}$ shown in~(\ref{eqn:ri}) and~(\ref{eqn:li}).
In the second step, we formulate an optimization problem to combine the transformations from the target data $\boldsymbol{R}_{k,t}$ and $\boldsymbol{l}_{k,t}$ and the transformations from the source data $\boldsymbol{R}_{k,t}^s$ and $\boldsymbol{l}_{k,t}^s$ as follows:
\begin{align}\label{eqn:MMSE1tf}\vspace{0.5cm}
\underset{\boldsymbol{R}_{i,t}}{\text{argmin}}\;\sum_{k\in D_T}\beta_k(\boldsymbol{R}_{i,t}-\boldsymbol{R}_{k,t})^2+\alpha\sum_{k\in D_S}\beta_k(\boldsymbol{R}_{i,t}-\boldsymbol{R}_{k,t}^s)^2,\\\label{eqn:MMSE2tf}
\underset{\boldsymbol{l}_{i,t}}{\text{argmin}}\;\sum_{k\in D_T}\beta_k(\boldsymbol{l}_{i,t}-\boldsymbol{l}_{k,t})^2+\alpha\sum_{k\in D_S}\beta_k(\boldsymbol{l}_{i,t}-\boldsymbol{l}_{k,t}^s)^2,
\end{align}
where $\alpha$ is the weight of transfer which determines the amount of effect the source data has on the target task. 

The solution for the joint learning formulation in~(\ref{eqn:MMSE1tf}) and~(\ref{eqn:MMSE2tf}) is the transformation matrices $\boldsymbol{R}_{i,t}$ and $\boldsymbol{l}_{i,t}$ that represent the environmental effects on object $i$ giving the target fingerprints and source fingerprints. The objective functions in~(\ref{eqn:MMSE1tf}) and~(\ref{eqn:MMSE2tf}) follow the form of joint convex optimization that allows us to use a range of efficient convex optimization algorithms to solve it such as standard gradient methods~\cite{bertsekas1999nonlinear}.
The limitation of previous transfer learning method is that it requires objects in both the target and source domains to have similar transformations. However, even if transformations are different in the target and source domains, other types of knowledge can be transferred as well between objects such as the value of the threshold $\tau$. For example, if objects in the target and source domains have different transformations but have similar estimation errors, such as same type of noise, or same data collection errors. In this case, the value of the threshold $\tau$ can be estimated from the source domain and used on the target domain, since the target domain has fewer data points than source domain to properly estimate $\tau$. 

The proposed transfer learning approach enables the environment estimation algorithm to use different types of objects in the estimation process. The transfer learning approach is essential in practical IoT systems which have wide diversity of objects with different types of features.

\vspace{0.1cm}
\section{Evaluation and Results}\label{sec:results}
To evaluate our approach, we used RFID data collected by Bertoncini et al. in~\cite{bertoncini2012wavelet}. The data consists of IQ recordings of 25 RFID tags of model AD (Avery-Dennison AD612) purchased from the same manufacturer. The writing and reading of RFID tags was performed with a Thing Magic Mercury 5e RFID Reader, and the antenna was an omnidirectional antenna from Laird technologies the sampling frequency was 4.0 Msps. The same Electronic Product Code (EPC) was written on all RFID tags to insure all RFID tags are identical. The recordings were captured at three different physical conditions: The first is at normal room temperature, the second RFID tags were warped by water submersion, and lastly RFID tags were warped by physically changing the angle between the RFID tag and the reader. The use of RFID technology in supply chain management is one of the primary applications of IoT systems. For example, Wal-Mart uses the RFID tags to manage its supplies~\cite{dillman2003wal}. In the supply chain management applications, RFID tags are scanned as the supplies move under the RFID tag reader which is similar to the third physical condition of the RFID dataset that we have. Therefore, in our evaluation, we will use the change of angle between reader and the RFID tag as an environmental effect that affects all the objects. 
\begin{figure}[!t]
  \begin{center}
	\centering
 	   \includegraphics[width=8cm]{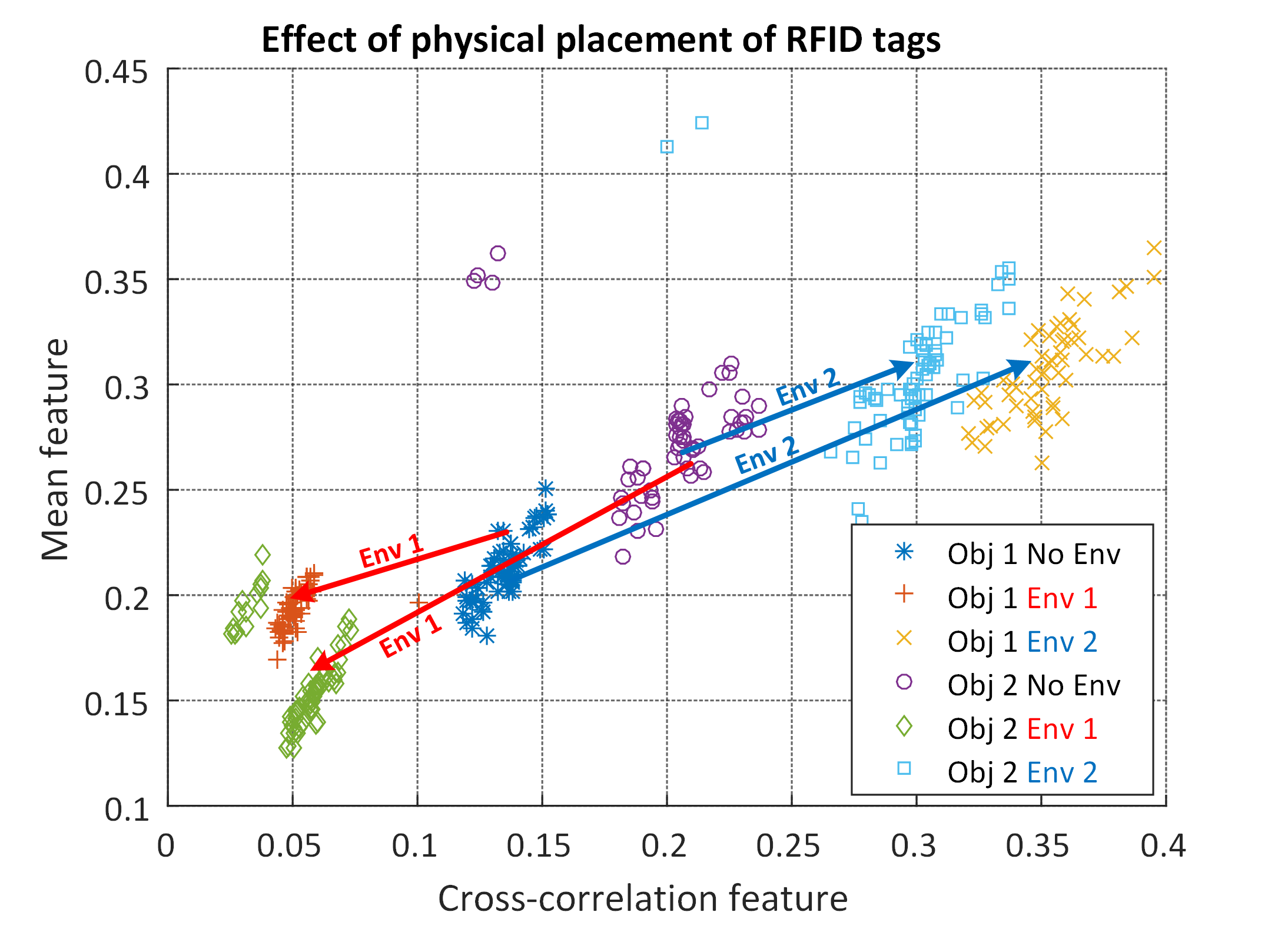}
  	  \caption{\small{\label{fig:feat}Effect of changing the angle between RFID tags and the RFID reader on two features of the RFID fingerprints which are the mean and cross-correlation features.}}
  \end{center}
\end{figure} 

\subsection{Feature Extraction}
For each RFID tag, we extract seven features from the IQ recordings as described in~\cite{bertoncini2012wavelet}: Mean of the EPC, variance of EPC, Shannon entropy, second central moment, skewness, kurtosis, and maximum cross-correlation. These features are referred to as higher order statistical calculations. In Fig.~\ref{fig:feat}, the mean of the EPC and the cross-correlation features are shown for two RFID tags. The figure shows that both RFID tags showed similar changes when the angle between the RFID tag and the reader changes. Additionally, from Fig.~\ref{fig:feat}, we notice that the effect of changing the angle is enough to shift the fingerprints features of an object and make these fingerprints closer to a completely different object than to the same object but without any environmental effects. For example, in Fig.~\ref{fig:feat}, the effect of the environment, highlighted using blue arrows with caption \emph{Env 2}, on the first object shifts the fingerprints closer to the second object than to the original fingerprints of the first object. Thus, the fingerprints of the first object would have been mistakenly assigned to the second object if the method of assignment relied only on the distance to the original object fingerprints without the use of any environmental estimation techniques.

\subsection{Evaluation Results}
The first evaluation for our approach is to show the effect of environment estimation on the accuracy of assigning each fingerprint to the correct object. Fig.~\ref{fig:acc} shows the average distance between collected fingerprints and the reference fingerprints. In the case of traditional algorithms, the existence of environmental effects on objects significantly increases the average distance between collected fingerprints and reference fingerprints. However, using the proposed algorithm to remove the environmental effects drops the average distance to values closer to the no environmental effects case. Fig.~\ref{fig:acc} shows close to 90\% improvement in average for total objects between 4 and 15 objects. The impact of having lower average distance between collected fingerprints and the reference fingerprints on the object authentication depends on the choice of detection threshold.

Fig.~\ref{fig:thre} shows the impact of the threshold on the percentage of correctly assigned fingerprints. In this experiment, the total number of objects is 20. In the case where environmental effects are not estimated, the minimum possible threshold to achieve 50 percent accurate assignment of objects. however, estimating the environment allows us to drop the value of the threshold to 3 and achieve a 100\% accurate assignment of the objects. The benefit of having low values for the threshold is shown in the following attack scenarios. 
\begin{figure}[!t]
  \begin{center}
	\centering
 	   \includegraphics[width=8cm]{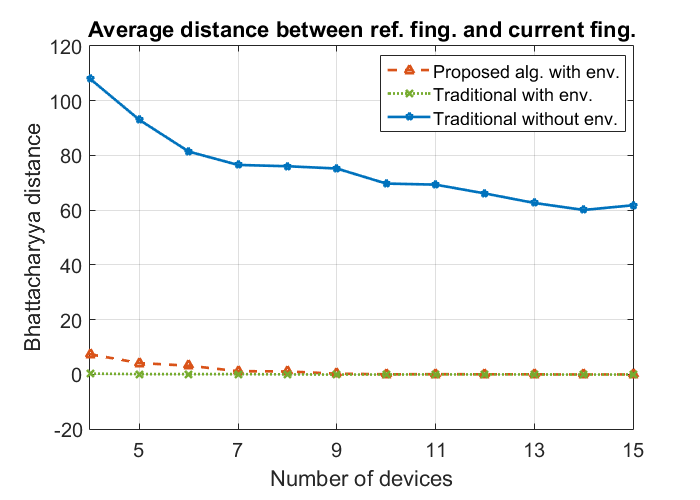}
  	  \caption{\small{\label{fig:acc}Average distance between reference fingerprints and current fingerprints with and without environment estimation with respect to the number of objects.}}
  \end{center}
\end{figure}

\begin{figure}[!t]
  \begin{center}\vspace{-0.5cm}
	\centering
 	   \includegraphics[width=8cm]{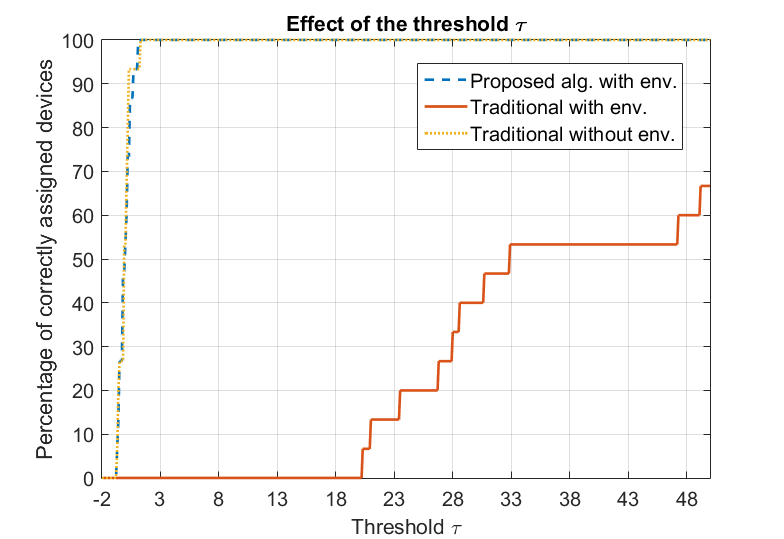}
  	  \caption{\small{\label{fig:thre}Effect of the threshold $\tau$  comparison between the proposed approach with and without transfer learning with respect to the number of objects.}}
  \end{center}
\end{figure}

In Fig.~\ref{fig:norep} we simulate a cyber emulation attack scenario. In this scenario, we have one attacker object throughout the experiment which explains the similar Bhattacharyya distance for the attacker object as the total number of objects in the experiment increases. From Fig.~\ref{fig:norep}, we can see that estimating the environmental effects increases the gap between the legitimate objects assignment and the attacker objects assignment. This large gap allows the choice of any value for the threshold between 10 and 150 to perfectly detect the attacker. However, in the traditional approach, the gap between the legitimate objects and the attacker object is between 70 and 150 which is smaller than the gap in the proposed approach. Fig.~\ref{fig:norep} shows that the proposed approach improves the gap of the threshold by $40\%$.
This result, corroborates Theorem 1, as it shows that estimating the effects of the environment allows the use of a lower threshold while ensuring that attackers are easier to detect as shown in Fig.~\ref{fig:norep}. Therefore, for cases in which the total number of objects is more than seven objects, the improvements translate to 100\% over the traditional approach that were unable to detect the attack.
\begin{figure}[!t]
  \begin{center}
	\centering
 	   \includegraphics[width=8cm]{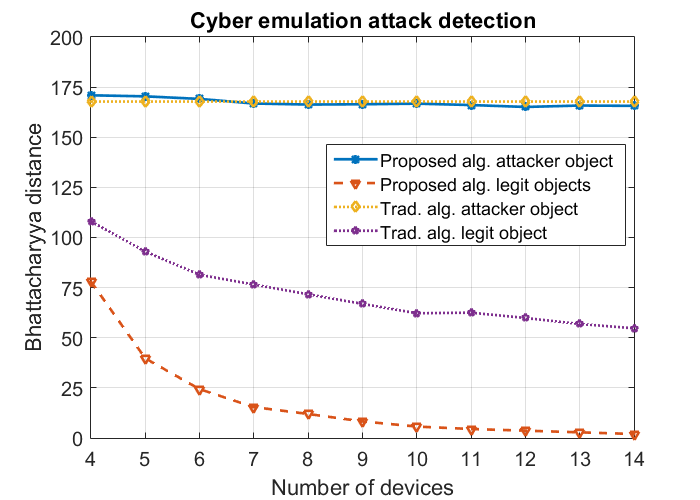}
  	  \caption{\small{\label{fig:norep}Average distance between reference fingerprints and current fingerprints in the case of a cyber emulation attack.}}
  \end{center}
\end{figure}

In Fig.~\ref{fig:repp}, we study the cyber-physical emulation attack scenario. In this figure, we used the same object fingerprints as the malicious fingerprints but under different environmental effects and similar to the previous experiment we use one attacker object as the total number of object increases. From Fig. 8, we can see that the traditional approach is unable to detect the attacker and assigns the attacker object as a legitimate object for any choice of threshold since the attacker's distance is always below the legitimate object's minimum distance. Meanwhile, for threshold values between 10 and 25, the proposed algorithm is able to detect the attacker when the total number of objects is more than seven, i.e. six legitimate objects and one attacker object. This result, corroborates Theorem 2, as it shows that environment estimation is able to detect cyber-physical emulation attacks that traditional algorithms were unable to detect.
In both emulation attack scenarios, we can observe that the average distance of the legitimate objects is significantly high when the total number of objects is small, i.e., less than seven objects, compared to when the number of objects is big, i.e., more than seven objects. This is due to the effect the attacker has on the estimation process. During the estimation process of legitimate objects, the attacker object is considered as a neighbor object in the estimation process and hence its effect is significant when the total number of objects is small. In contrast, for cases in which the total number of objects is large, the effect the attacker object has on the environment estimation process of all the legitimate objects in the system is small. This limitation can be overcome by implementing a multi-stage estimation process, where multiple estimation processes are executed at different time frames. At each time frame, all objects labeled as attackers are excluded from future estimation processes. This approach ensures that all objects in the estimation process are legitimate.

\begin{figure}[!t]
  \begin{center}
	\centering
 	   \includegraphics[width=8cm]{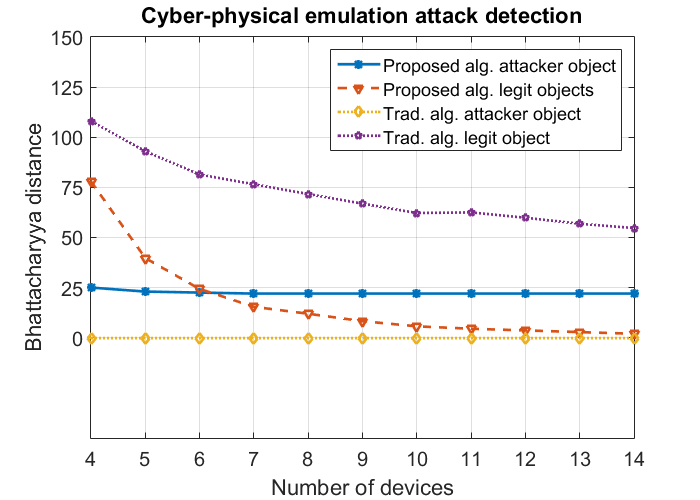}
  	  \caption{\small{\label{fig:repp}Average distance between reference fingerprints and current fingerprints in the case of a cyber-physical emulation attack detection. The attack is successful, when the attacker object distance is below the legitimate object distance.}}
  \end{center}
\end{figure}

To simulate the effect of transfer learning, first we divided the dataset into two groups of objects, a target group with five objects and a source group with 15 objects. Second we use three of the features in the first set of objects along with different features from the second set. Therefore, we would have two sets of objects with different types of features for each set. In order to transfer knowledge from the source dataset to the target dataset, both the source dataset and the target dataset need to be related to each other. For example, objects that have different types and have different features need to share similar behaviors when exposed to the same environmental effects in order to transfer knowledge between them. In our experiment, the features of objects in the source dataset change in a similar way to the features of objects in the target dataset when exposed to the same environmental effects even though the features are of different types and have different range of values. To simulate the different amounts of relation between the source dataset and the target dataset, we introduce different degrees of noise to the fingerprints of the source objects starting from small amount of added noise referred to as Class 1 and all the way to large amount of added noise referred to as Class 5. Fig.~\ref{fig:tran} shows the effect of transfer learning as the relationship between the source and target datasets decreases from Class 1 to Class 5. The figure shows a small average distance between fingerprints when the relation between the source and target dataset is high, as in Classes 1 to 3 where we get around 70\% improvement over the no transfer approach. However, when the relationship between the source and target datasets is small, as in Classes 4 and 5 in Fig.~\ref{fig:tran}, the transfer learning method increases the average distance and yields a lower detection rate compared to having no transfer learning. This case is called negative transfer when forcing a transfer learning method even though the source and target dataset are not related to each other. The figure shows two values of the transfer weight $\alpha$, 25\% or 50\% in Fig.~\ref{fig:tran}, that represent the amount of weight the transfer learning method has on the final assignment of fingerprints. The figure shows that relying more on transfer learning is better when the relationship between the source and target datasets is high, as shown for Classes 1 and 2 in Fig.~\ref{fig:tran}. In contrast, relying on transfer learning has a negative impact when the relationship between the source and target datasets is low, as shown for Classes 4 and 5 in Fig.~\ref{fig:tran}. The figure shows an overall improvement of 80\% in the average distance when the relationship between the source and target datasets is high.

\begin{figure}[!t]
  \begin{center}
	\centering
 	   \includegraphics[width=8cm]{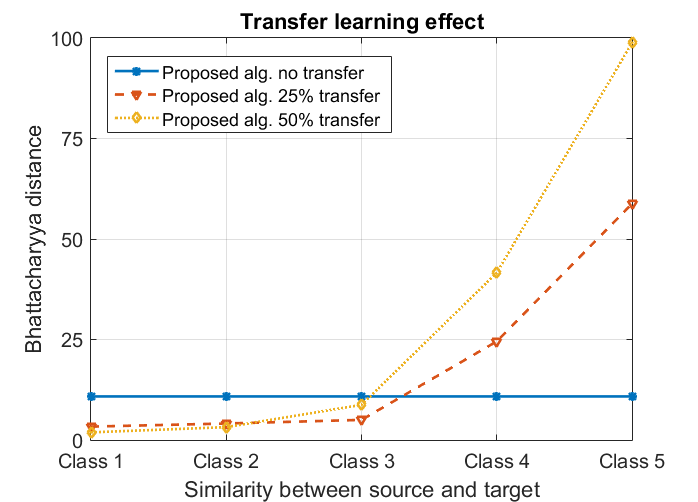}\vspace{-0.2cm}
  	  \caption{\small{\label{fig:tran}The effect of transfer learning between source and target datasets as the relationship between the source and target datasets changes from the highest relation in Class 1 to the lowest relation in Class 5.}}
  \end{center}
\end{figure}

\vspace{-0.5cm}
\section{Conclusion}\label{sec:end}
In this paper, we have proposed a novel authentication framework for IoT systems. The proposed framework exploits the effects of the environment surrounding IoT objects to detect remote emulation attackers, who can replicate the signals of legitimate objects but fail to replicate the constantly changing environment around the legitimate IoT object. The proposed framework tracks the changes in objects' fingerprints and uses these changes to extract a model for the environment. We have shown that our framework can enhance the authentication of legitimate objects and is able to detect both cyber and cyber-physical emulation attacks. The proposed framework used transfer learning as well to estimate the environment from objects of different types or objects with different feature spaces.  Simulation results using real IoT device data have shown that conventional methods of device fingerprinting were unable to detect cyber-physical emulation attackers while our proposed framework were able to detect these type of attacks. The results have shown an improvement of 40\% in cyber emulation attacks detection, and the transfer learning results showed an improvement of 70\% when the framework used different types of objects compared to using the framework only on objects with the exact same feature space.
\vspace{-0.2cm}


%

%
%
%
%
%

\ifCLASSOPTIONcaptionsoff
  \newpage
\fi



%

\bibliographystyle{IEEEtran}

\bibliography{IEEEfull}

%
%

%

%
%
%




\end{document}